\newtheorem{theorem}{Theorem}
\newtheorem{definition}{Definition}
\newtheorem{proposition}{Proposition}
\newcommand{\tr}{\operatorname{Tr}}
\begin{document}

\title{Principle of information causality rationalizes quantum composition}

\author{Ram Krishna Patra}
\affiliation{Department of Physics of Complex Systems, S. N. Bose National Center for Basic Sciences, Block JD, Sector III, Salt Lake, Kolkata 700106, India.}

\author{Sahil Gopalkrishna Naik}
\affiliation{Department of Physics of Complex Systems, S. N. Bose National Center for Basic Sciences, Block JD, Sector III, Salt Lake, Kolkata 700106, India.}

\author{Edwin Peter Lobo}
\affiliation{Laboratoire d’Information Quantique, Université libre de Bruxelles (ULB), Av. F. D. Roosevelt 50, 1050 Bruxelles, Belgium}

\author{Samrat Sen}
\affiliation{Department of Physics of Complex Systems, S. N. Bose National Center for Basic Sciences, Block JD, Sector III, Salt Lake, Kolkata 700106, India.}

\author{Govind Lal Sidhardh}
\affiliation{Department of Physics of Complex Systems, S. N. Bose National Center for Basic Sciences, Block JD, Sector III, Salt Lake, Kolkata 700106, India.}

\author{Mir Alimuddin}
\affiliation{Department of Physics of Complex Systems, S. N. Bose National Center for Basic Sciences, Block JD, Sector III, Salt Lake, Kolkata 700106, India.}

\author{Manik Banik}
\affiliation{Department of Physics of Complex Systems, S. N. Bose National Center for Basic Sciences, Block JD, Sector III, Salt Lake, Kolkata 700106, India.}

\begin{abstract}
Principle of information causality, proposed as a generalization of no signaling principle, has efficiently been applied to outcast beyond quantum correlations as unphysical. In this letter we show that this principle when utilized properly can provide physical rationale towards structural derivation of multipartite quantum systems. In accordance with no signaling condition state and effect spaces of a composite system can allow different possible mathematical descriptions even when description for the individual systems are assumed to be quantum. While in one extreme, namely the maximal tensor product composition, the state space becomes quite exotic and permits composite states that are not allowed in quantum theory, the other extreme -- minimal tensor product composition -- contains only separable states and the resulting theory allows only Bell local correlation. As we show, none of these compositions does commensurate with information causality, and hence get invalidated to be the bona-fide description of nature. Information causality, therefore, promises information theoretic derivation of self-duality of state and effect cones for composite quantum systems.  
\end{abstract}

\maketitle
{\it Introduction.--} Quantum mechanics is the most effective theory to describe almost all the natural phenomena. Its successful marriage with information theory leads to numerous novel technologies and promises a lot more \cite{Deutsch2020}. However, the theory, starting from its inception, prompts huge debates regarding its interpretation \cite{Interpretation1,Interpretation2,Interpretation3,Interpretation4,Interpretation5,Interpretation6,Interpretation7,Interpretation8} that persists till date \cite{Tegmark1998,Schlosshauer2013}. Quantum formalism starts with abstract mathematical description of Hilbert space and cries for its physical justification. The celebrated no-signaling (NS) principle, that prohibits instantaneous communication between distant parties, cannot serve the purpose alone. It allows a broad variety of mathematical models as the possible candidate of the theory of nature. Interestingly, inspired by the studies in quantum information theory, during the recent past, several novel principles have been proposed to circumvent the limitation of NS principle \cite{Brassard2006,vanDam2013,Pawlowski2009,Navascues2010,Fritz2013,Amaral2014,Dallarno2017}. These new principles quite efficiently identify some beyond quantum NS correlations as unphysical and thus adduce physical justification(s) to quantum correlations.  

In this letter we analyze one of the intriguing principles called information causality (IC), proposed nearly a decade back \cite{Pawlowski2009}. IC can be envisaged as a generalization of the NS condition. It limits the information gain that a receiver (say Bob) can reach about a previously unknown to him data set of a sender (say Alice), by using two types of resources: (i) all his local resources that might be correlated with the sender  which in this work we will call Type-I resource, and (ii) some physical system carrying bounded amount of information from Alice to Bob, Type-II resource. Both these resources can further be of different kinds -- classical, quantum, and beyond quantum; and IC principle provides a way to test their physicality. Although the correlated resources by themselves have no communication utility, as shown in the seminal superdense coding paper \cite{Bennett1992}, a quantum correlation {\it viz.} entanglement can double up the communication capacity of a quantum channel. The power of entanglement, however, is limited in a way as it cannot enhance communication capacity of a classical channel. Principle of IC generalizes this no-go by limiting Bob's information gain to be at most $m$ bits when $m$ classical bits are communicated by Alice to him and he is allowed to use any of his local resources that might be correlated with Alice. Quite interestingly several NS correlations violate this principle and thus considered as unphysical  \cite{Ahanj2010,Das2013,Gazi2013,Miklin2021}. In essence, restricting the Type-II resources to be classical, the IC principle discards some of the Type-I  resources as unphysical. 

Here we study the reverse scenario, {\it i.e.}, Type-II resource is varied among several possibilities while restricting the Type-I resource to be classical. In particular, we ask the question whether unphysicality of some Type-II resources can be established through information principles. Interestingly, as we will show, principle of IC becomes effective in this case too. We consider the scenarios where individual systems are assumed to be quantum, but their composition can be modelled by any theory that satisfy the NS condition. Even for two quantum systems several consistent compositions are possible among which quantum theory is one of the examples. The state space of the resulting system lies in between two extremes -- maximal tensor product state space and minimal tensor product state space \cite{Namioka1969}. While the maximal one grants exotic joint states that are not allowed in quantum theory, the minimal one allows only separable states. It turns out that the system obtained through maximal tensor product of two elementary quantum violates the IC principle. This is quite remarkable as all the NS correlations obtained from beyond quantum states are in fact quantum simulable and hence cannot yield beyond quantum nonlocal correlation \cite{Barnum2010}. We then show that minimal tensor product composition is also not compatible with IC principle and hence gets excluded. This is even more striking as the resulting theory allows only separable states and hence in Bell type experiments no nonlocal correlation is possible. Our work establishes a novel yet unexplored proficiency of the IC principle as it discards extreme descriptions of composite quantum systems, and thus promises physical rationale for quantum composition.

{\it Compositions of elementary quantum.--} Within the mathematical framework of generalized probability theory (GPT) an elementary system $\mathcal{S}$ is specified by the tuple of normalized state and effect spaces, {\it i.e.} $\mathcal{S}\equiv(\Omega,\mathcal{E})$ \cite{Barrett2007,Chiribella2011,Barnum2011,Masanes2011,Plavala2021}. Sometimes it is convenient to deal with unnormalized states and effects that form convex cones embedded in some $\mathbb{R}^n$. A GPT also captures the description of the composite system $\mathcal{S}^{AB}\equiv(\Omega^{AB},\mathcal{E}^{AB})$ consisting of component subsystems $\mathcal{S}^{A}\equiv(\Omega^{A},\mathcal{E}^{A})$ and $\mathcal{S}^{B}\equiv(\Omega^{B},\mathcal{E}^{B})$. Under the restriction of NS and local tomography \cite{Barnum2012} the composite state space $\Omega^{AB}$ lies in between two extremes -- (i) the maximal tensor product state space and (ii) the minimal tensor product state space \cite{Namioka1969}. For instance, the state cone of a quantum system associated with a Hilbert space $\mathcal{H}$ is the set of positive semidefinite operators $\mathcal{P}(\mathcal{H})\subset\mathcal{L}(\mathcal{H})$ acting on $\mathcal{H}$, whereas the normalized states are the set of density operators $\mathcal{D}(\mathcal{H})$; here $\mathcal{L}(\mathcal{H})$ denotes the set of all linear operators acting on $\mathcal{H}$. For two quantum systems associated with Hilbert spaces $\mathcal{H}^A$ and $\mathcal{H}^B$ respectively, the state cone for maximal tensor product system is given by
\begin{align}
\nonumber
\Omega^{AB}_+[\text{max}]&:=\{\mathcal{W}\in\mathcal{L}(\mathcal{H}^A\otimes\mathcal{H}^B)|\tr[\mathcal{W}(\pi_A\otimes\pi_B)]\ge0 \\
&~\forall \pi_A \in\mathcal{P}(\mathcal{H}^A)~ \text{and}~~\forall \pi_B\in\mathcal{P}(\mathcal{H}^B)\}.
\end{align}
States in this theory are called positive on pure tensors (POPT) states \cite{Barnum2010}. All the quantum states are POPT, but there are POPT states that are not allowed quantum states. The effect cone is constructed in accordance with the no-restriction hypothesis that allows all mathematically consistent effects in the theory \cite{Chiribella2010}:
\begin{align}
\nonumber
\mathcal{E}^{AB}_+[\text{max}]&:=\{\pi~|~\pi=\sum_i\pi_i^A\otimes\pi_i^B;~ \pi^A_i\in\mathcal{P}(\mathcal{H}^A)~\\
&	\&~\pi^B_i\in\mathcal{P}(\mathcal{H}^B)\}.
\end{align}
As it turns out $\mathcal{E}^{AB}_+[\text{max}]$ also forms a cone which is dual to the state cone $\Omega^{AB}_+[\text{max}]$.
In the other extreme, minimal tensor product contains only separable states, but the effect space gets enlarged here. More particularly, the role of state and effect cones of maximal tensor product are interchanged in the minimal case, {\it i.e.},
\begin{align}
\Omega^{AB}_+[\text{min}]:=\mathcal{E}^{AB}_+[\text{max}]~\& ~\mathcal{E}^{AB}_+[\text{min}]:=\Omega^{AB}_+[\text{max}].
\end{align}
Quantum composition lies in between the maximal and mininal case, and the state and effect cones become self dual, {\it i.e.}, $\Omega^{AB}_+[Q]=\mathcal{P}(\mathcal{H}^A\otimes\mathcal{H}^B)=\mathcal{E}^{AB}_+[Q]$. An instructive image for theses different composite structures is shown in Fig. \ref{fig2}.
\begin{figure}[t!]
	\centering
	\includegraphics[width=0.45\textwidth]{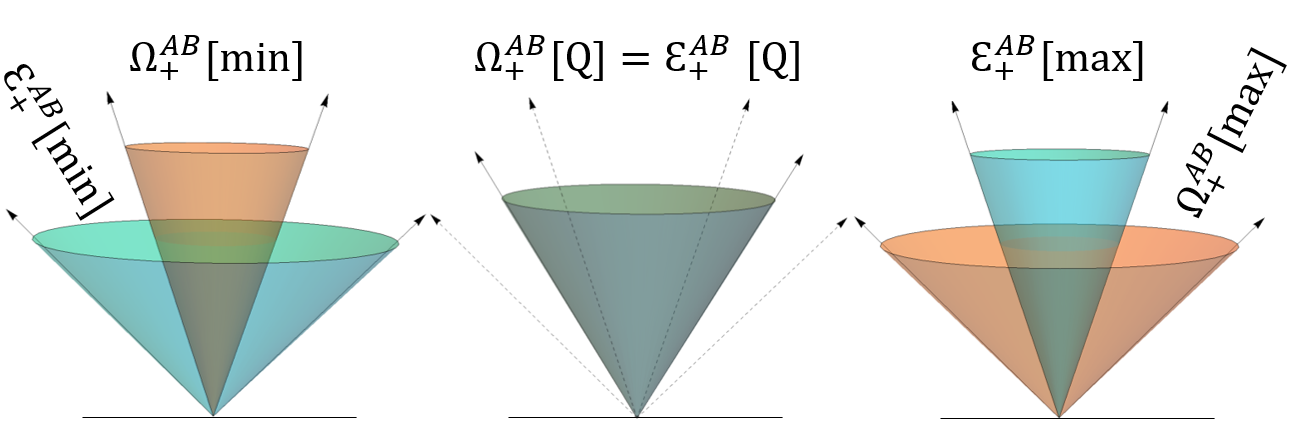}
	\caption{Different possible compositions of two elementary quantum systems. Left: minimal tensor product composition -- allows only separable states but effect cone is bigger than quantum effect cone. Right: maximal tensor product composition -- allows only separable effect but state cone is bigger than quantum. Middle: quantum composition, state and effect cones are identical (self dual).}
	\label{fig2}
\end{figure}

Here we recall the concept of information capacity \cite{Muller2012} (see also \cite{Arai2019,Naik2022,Sen2022}), which is relevant for studying communication utility of a GPT system.
\begin{definition}
[CMP {\bf 316}, 441 (2012)]
Information capacity $\mathcal{I}(\mathcal{S})$ of a system $\mathcal{S}$ is the maximum number of states that can be perfectly distinguished in a single shot measurement.
\end{definition}
In standard information-theoretic unit information capacity can be quantified in terms of bit. If maximum $d$ number of states are perfectly distinguishable by single-shot measurement then $\mathcal{I}(\mathcal{S})=\log d$ bit. Information capacity is closely related to the amount of classical information that can be transferred perfectly by sending one copy of the system. For instance, the state space of a two level classical and quantum systems are the $1$-simplex and unit sphere in $\mathbb{R}^3$ respectively, and both have information capacity $2$ (or $\log2$ bit). However, it might happen that the Holevo capacity $\Theta(\mathcal{S})$ of a system $\mathcal{S}$ exceeds its information capacity. By Holevo capacity of a system we mean the optimal rate of transfer of classical information when asymptotically many copies of the system are in use. While finding Holevo capacity, the optimization is performed over all possible encodings and decodings allowed in the composite theory, {\it i.e.}, $\Theta(\mathcal{S}):=\lim_{k\to\infty}\frac{1}{k}\log\left[\mathcal{I}(\mathcal{S}^{\otimes k})\right]$. It turns out to be same as information capacity whenever information capacity is additive, which follows from the results in \cite{Frenkel2015}. For quantum composition it is known to be additive. On the other hand, as shown in \cite{Massar2014}, for the toy models with polygonal state space \cite{Janotta2011} have information capacity $1$ bit, but for odd gon cases the Holevo capacity is larger than $1$ bit. Our next result shows that Holevo capacity in minimal tensor product composition is additive.
\begin{proposition}\label{prop1}
The information capacity of the minimal composition of $k$ local quantum systems, each of dimension $d$, is $d^k$, {\it i.e.}, $\mathcal{I}
\left((\mathbb{C}^d)^{\otimes_{\min}k}\right) = d^k$.
\end{proposition}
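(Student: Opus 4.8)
The plan is to prove the equality by matching bounds, working in the representation in which the $k$-fold minimal composition has as its normalized states the separable density operators on $\mathcal{H}=(\mathbb{C}^d)^{\otimes k}$ and as its effects the block-positive operators (the maximal/POPT effect cone dual to the separable state cone), a measurement being a family $\{E_i\}$ of block-positive operators with $\sum_i E_i = I$. By Definition 1, $\mathcal{I}$ equals the largest $n$ for which there exist separable states $\{\omega_j\}_{j=1}^n$ and such a measurement $\{E_i\}_{i=1}^n$ satisfying $\tr[E_i\omega_j]=\delta_{ij}$. For the lower bound I would exhibit an explicit family: starting from a local orthonormal basis $\{|e_i\rangle\}_{i=1}^d$ of $\mathbb{C}^d$, the $d^k$ product vectors $|e_{i_1}\rangle\otimes\cdots\otimes|e_{i_k}\rangle$ are mutually orthogonal product (hence separable) states, and the associated rank-one product projectors form a legitimate measurement in every composition, since product effects are simultaneously quantum and block-positive. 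These perfectly distinguish the $d^k$ states, giving $\mathcal{I}\ge d^k$.

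The substance lies in the reverse inequality $\mathcal{I}\le d^k$. Fixing a maximal distinguishing configuration, I would first pass to pure product states: writing each separable state as $\omega_j=\sum_\alpha q^j_\alpha\,|\psi^j_\alpha\rangle\langle\psi^j_\alpha|$ with $q^j_\alpha>0$ and product vectors $\psi^j_\alpha$, the relations $\tr[E_j\omega_j]=1=\tr\omega_j$ and $\tr[E_i\omega_j]=0$ for $i\neq j$, combined with the block-positivity of both $E_i$ and $I-E_i$ on product vectors, force each term of the decomposition to be extremal, i.e. $\langle\psi^j_\alpha|E_j|\psi^j_\alpha\rangle=1$ and $\langle\psi^j_\alpha|E_i|\psi^j_\alpha\rangle=0$ for $i\neq j$. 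Since $\sum_i E_i=I$, the product-vector expectations $\psi\mapsto(\langle\psi|E_i|\psi\rangle)_i$ form a probability vector, so this assigns every supporting product vector of $\omega_j$ deterministically to outcome $j$. The question thereby reduces to bounding the number of rank-one product states $P_j=|\psi_j\rangle\langle\psi_j|$ that can be perfectly discriminated by a block-positive resolution of the identity.

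The main obstacle is that this final count cannot be closed by the standard quantum argument. For a genuine POVM one uses $\langle\phi|E_i|\phi\rangle=0\Rightarrow E_i|\phi\rangle=0$ together with $\langle\psi|E_i|\psi\rangle=1\Rightarrow E_i|\psi\rangle=|\psi\rangle$ to deduce $\langle\psi_i|\psi_j\rangle=0$, capping the family at $\dim\mathcal{H}=d^k$; but block-positive effects need not be positive semidefinite -- this is precisely the entanglement-witness phenomenon -- so a vanishing product expectation no longer annihilates the vector, and mere linear independence of the $P_j$ yields only the far weaker bound $d^{2k}=\dim\mathrm{Herm}(\mathcal{H})$. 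Closing this gap therefore demands genuine use of the global positivity constraint $\sum_i E_i=I$ rather than linear independence alone. The clean route I would take is to invoke the additivity (multiplicativity in the number of perfectly distinguishable messages) of the information capacity of the constituent quantum systems established in \cite{Frenkel2015}: enlarging the quantum effect cone to the block-positive cone cannot increase the number of perfectly distinguishable separable states beyond the quantum value $d^k$. I expect this positivity-versus-linear-independence gap to be the crux of the proof.
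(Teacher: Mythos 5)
Your lower bound and your reduction to pure product states are both correct (the latter is in fact more careful than the paper's own write-up, which simply asserts the pure-product form of the distinguished states). The genuine gap is in the final step, the one you yourself flag as the crux: the appeal to Frenkel--Weiner \cite{Frenkel2015} cannot close it. That theorem concerns classical information encoded in quantum states and decoded by genuine quantum POVMs (positive semidefinite elements); it yields additivity of information capacity for the \emph{quantum} composition only. Here the decoding measurement lives in the strictly larger block-positive cone, which is exactly the feature you correctly identified as breaking the standard quantum argument. Nor is there a simulation route back to the quantum case: since product operators span the whole Hermitian operator space, two operators with equal expectation on all separable states are equal as operators, so a block-positive but non-PSD measurement element can never be reproduced by any quantum POVM element on separable inputs. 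Invoking \cite{Frenkel2015} at this point is therefore not a citation of a known result but a restatement of the very claim to be proved --- indeed the paper proves Propositions 1 and 2 precisely because the extremal compositions fall outside the scope of Frenkel--Weiner.

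The missing idea --- the one the paper uses --- is elementary. For each pure product state $\ket{a_i}$, the complement $P_i:=\mathbf{1}_{d}^{\otimes_{k}}-\ket{a_i}\bra{a_i}$ is itself a \emph{separable} operator: extend $\ket{a_i}$ to an orthonormal basis of product vectors, and $P_i$ is the sum of the remaining product projectors. Block-positivity of $E_i$ is precisely membership in the dual of the separable cone, so $\tr[E_iP_i]\ge 0$. Hence
\begin{align*}
d^k=\tr\bigl(\mathbf{1}_{d}^{\otimes_{k}}\bigr)=\sum_{i=1}^{N}\tr\bigl[E_i\bigl(\ket{a_i}\bra{a_i}+P_i\bigr)\bigr]\ge\sum_{i=1}^{N}\tr\bigl[E_i\ket{a_i}\bra{a_i}\bigr]=N,
\end{align*}
giving $N\le d^k$ without ever requiring the $E_i$ to be positive semidefinite. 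So your instinct was right that the global constraint $\sum_iE_i=\mathbf{1}_{d}^{\otimes_{k}}$ must be used in an essential way, but it enters through this trace-counting argument paired with separability of the complements $P_i$, not through any additivity theorem for quantum measurements.
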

\begin{proof}
Assume that $\mathcal{I}
\left((\mathbb{C}^d)^{\otimes_{\min}k}\right) = N$. Clearly, $N\ge d^k$ since there are $d^k$ product states in $(\mathbb{C}^d)^{\otimes_{\min}k}$ that can be perfectly distinguished by product measurements. Furthermore, since we have assumed $\mathcal{I}
\left((\mathbb{C}^d)^{\otimes_{\min}k}\right) = N$, there exist $N$ pure product states $\{\ket{a_1}, \cdots , \ket{a_N} \}$ and a measurement $\{E_1,\cdots,E_N~|~\sum_{i=1}^{N} E_i =\mathbf{1}_{d}^{\otimes_{k}}\}$ such that $\tr(E_i \ket{a_j}\bra{a_j})=\delta_{ij} ~~\forall i,j$. Let $P_i$ be the projector on the orthogonal support of $\ket{a_i}\bra{a_i}$, {\it i.e.}, $P_i = \mathbf{1}_{d}^{\otimes_{k}} - \ket{a_i}\bra{a_i}$. It is easy to show that $P_i$ is a separable operator. To see this, first introduce an orthonormal basis for $\mathcal{H}_{d}^{\otimes_{k}}$ consisting of pure product states such that $\ket{a_i}\bra{a_i}$ is one of the elements of the basis set. Then, $P_i$ can be written as the sum of projectors on all the basis vectors except $\ket{a_i}\bra{a_i}$. We have,
\begin{align*}
d^k &= \tr(\mathbf{1}_{d}^{\otimes_{k}}) = \sum_{i=1}^{N} \tr(E_i)= \sum_{i=1}^{N} \tr[E_i    \mathbf{1}_{d}^{\otimes_{k}}]\nonumber\\
&= \sum_{i=1}^{N} \tr[E_i   (  \ket{a_i}\bra{a_i} + P_i) ]\nonumber\\
&= \sum_{i=1}^{N} \tr[E_i  \ket{a_i}\bra{a_i} ]  + \sum_{i=1}^{N} \tr[E_i P_i ]\nonumber\\
&\ge \sum_{i=1}^{N} \tr[E_i  \ket{a_i}\bra{a_i}] = \sum_{i=1}^{N} \delta_{ii} = N.
\end{align*}
The inequality follows from the fact that $P_i$ is a separable operator, and thus $\tr[E_i P_i] \ge 0$. We, therefore, conclude that $N= \mathcal{I}
\left((\mathbb{C}^d)^{\otimes_{\min}k}\right) = d^k$.
\end{proof}
As an immediate implication it follows that the Holevo capacity of minimal tensor product of local quantum systems are same as their information capacity, {\it i.e.},
$\Theta(\mathbb{C}^d_{\min}):=\lim_{k\to\infty}\frac{1}{k}\log\left[\mathcal{I}((\mathbb{C}^d)^{\otimes_{\min} k})\right]= \lim_{k\to\infty}\frac{1}{k}\log(d^k) = \log d$. A similar result, as stated in the next proposition, also holds for maximal composition of locally quantum systems.
\begin{proposition}\label{prop2}
The information capacity of the maximal composition of $k$ local quantum systems, each of dimension $d$, is $d^k$, {\it i.e.}, $\mathcal{I}
\left((\mathbb{C}^d)^{\otimes_{\max}k}\right) = d^k$.
\end{proposition}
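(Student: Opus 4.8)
The plan is to adapt the argument of Proposition \ref{prop1}, but with the roles of states and effects interchanged, since in the maximal composition the state cone is the POPT cone $\Omega^{AB}_+[\text{max}]$ while the effect cone is the separable cone $\mathcal{E}^{AB}_+[\text{max}]$ — precisely the dual swap of the minimal case. The lower bound $\mathcal{I}\left((\mathbb{C}^d)^{\otimes_{\max}k}\right)\ge d^k$ is immediate: the $d^k$ orthonormal product states $\ket{i_1\cdots i_k}$ are perfectly distinguished by the product measurement $\{\ket{i_1\cdots i_k}\bra{i_1\cdots i_k}\}$, whose elements are product operators and hence legitimate separable effects. What remains is to establish the matching upper bound $N:=\mathcal{I}\left((\mathbb{C}^d)^{\otimes_{\max}k}\right)\le d^k$.

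For the upper bound I would suppose that $N$ normalized POPT states $\{\omega_1,\dots,\omega_N\}$ are perfectly distinguished by a separable measurement $\{E_1,\dots,E_N\mid\sum_i E_i=\mathbf{1}_{d}^{\otimes_{k}}\}$, so that $\tr[E_i\omega_j]=\delta_{ij}$. Mimicking the chain of (in)equalities in Proposition \ref{prop1}, I would write $\tr[E_i]=\tr[E_i\mathbf{1}_{d}^{\otimes_{k}}]=\tr[E_i\omega_i]+\tr[E_i(\mathbf{1}_{d}^{\otimes_{k}}-\omega_i)]$ and then sum over $i$ to obtain
\begin{align*}
d^k=\tr[\mathbf{1}_{d}^{\otimes_{k}}]=\sum_{i=1}^N\tr[E_i]\ge\sum_{i=1}^N\tr[E_i\omega_i]=N,
\end{align*}
provided the cross term $\tr[E_i(\mathbf{1}_{d}^{\otimes_{k}}-\omega_i)]$ is nonnegative for every $i$. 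Since $E_i$ lives in the separable cone and the separable cone is dual to the POPT cone (as noted in the construction of $\mathcal{E}^{AB}_+[\text{max}]$), this nonnegativity would follow the moment one knows that $\mathbf{1}_{d}^{\otimes_{k}}-\omega_i$ is itself a POPT operator.

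Hence the crux — and the step I expect to be the main obstacle — is the lemma that the complement $\mathbf{1}_{d}^{\otimes_{k}}-\omega$ of any normalized POPT state $\omega$ is again POPT; this is the exact dual of the elementary observation used in Proposition \ref{prop1} that $\mathbf{1}_{d}^{\otimes_{k}}-\ket{a_i}\bra{a_i}$ is separable. To prove it I would reduce, by eigendecomposing each local factor and expanding by positivity, to checking positivity against rank-one product projectors $\ket{c_1\cdots c_k}\bra{c_1\cdots c_k}$, i.e. to the bound $\bra{c_1\cdots c_k}\omega\ket{c_1\cdots c_k}\le 1$. Completing each local vector $\ket{c_j}$ to a local orthonormal basis yields a product orthonormal basis of $(\mathbb{C}^d)^{\otimes k}$ containing $\ket{c_1\cdots c_k}$; since every diagonal entry of $\omega$ in a product basis is nonnegative by the POPT property and they sum to $\tr[\omega]=1$, each single entry is at most $1$. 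This closes the cross term and, together with the lower bound, yields $N=d^k$.
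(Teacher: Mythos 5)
Your proposal is correct and follows essentially the same route as the paper's own proof in Appendix A: the same lower bound via perfectly distinguishable product states, the same trace-counting chain $d^k=\sum_{i}\tr[E_i]\ge\sum_{i}\tr[E_i\omega_i]=N$, and the same key lemma that $\mathbf{1}_{d}^{\otimes_{k}}-\omega$ is POPT so that the cross terms $\tr[E_i(\mathbf{1}_{d}^{\otimes_{k}}-\omega_i)]$ are nonnegative by separability of the $E_i$. The only difference is cosmetic: where the paper justifies $\bra{c_1\cdots c_k}\omega\ket{c_1\cdots c_k}\le 1$ by noting that this number is the probability of a valid effect on a normalized state of the maximal composition, you derive it directly from nonnegativity of the diagonal entries of $\omega$ in a product orthonormal basis together with $\tr[\omega]=1$, which is a slightly more self-contained justification of the same step.
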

Proof of this proposition follows a similar reasoning as of Proposition \ref{prop1}. However, for the sake of completeness we detail the proof in Appendix {\bf A}. The objective of the present work is to see whether these extreme compositions can be discarded from some physical ground. In the following we will invoke the IC principle to this aim. Before proceeding to our next results we briefly recall the IC principle.  

{\it Information Causality Game.--}
\begin{figure}[t!]
	\centering
	\includegraphics[width=0.48\textwidth]{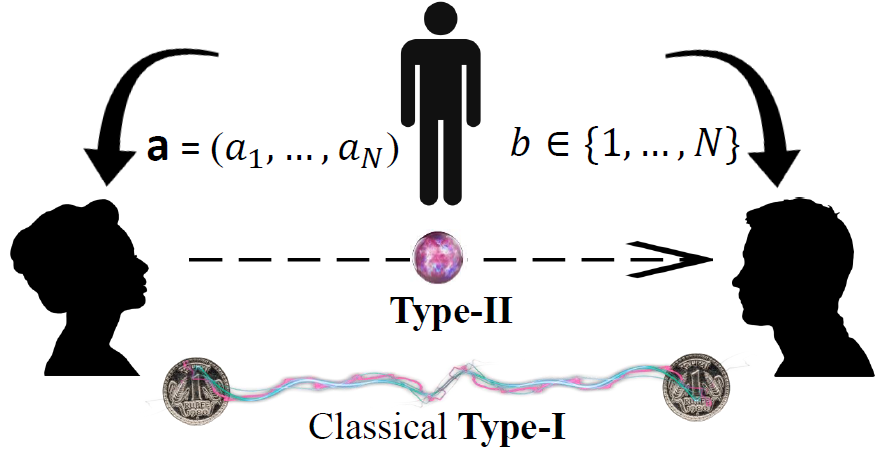}
	\caption{(IC-N game) Referee provides a randomly chosen N-bit string $\mathbf{a}\in\{0,1\}^N$ to Alice and a random value $b\in\{1,\cdots,N\}$ to Bob. Bob's aim is to correctly guess the $b^{th}$ bit of Alice's string, {\it i.e.}, to yield the outcome $\beta=a_b$. Alice and Bob can share classical Type-I resources, {\it i.e.}, classical correlation. Additionally she can send some physical system (Type-II resource) carrying bounded amount of information to Bob.}
	\label{fig1}
\end{figure}
It is instructive to understand the IC principle through the distributed version of random access code task \cite{Ambainis2002}, which we recall from the original IC paper (see Fig.\ref{fig1}). Alice receives a string of $N$ independent bits, $\mathbf{a}=(a_1,\cdots,a_N)$ randomly sampled from $\{0,1\}^N$, and Bob receives a random value of $b\in\{1,\cdots,N\}$. Bob's aim is to correctly guess the value of the $b^{th}$ bit of Alice {\it i.e,} $a_b$. Alice and Bob may possess some classical shared randomness, but unlike the original paper, they do not share any quantum entanglement or beyond quantum correlation ({\it eg.} the Popescu-Rohrlich correlation \cite{Popescu1994}). We call this IC-N game, and the efficiency of the collaborative strategy is quantified through the quantity 
\begin{align}
	I_N=\sum_{k=1}^NI\left(a_k:\beta~|~b=k\right),
\end{align}
where $I\left(a_k:\beta~|~b=k\right)$ is the Shannon's mutual information between the $k^{th}$ bit of Alice and Bob's guess $\beta$, computed under the condition that Bob has received $b=k$. If Alice communicates $m$-cbits to Bob then the efficiency is upper bounded by $I_N\le m$. Furthermore, $I_N\le m$ turns out to be a necessary condition for IC in an arbitrary theory where mutual information satisfies three abstract properties -- (a) consistency, (b) data processing inequality, and (c) chain rule \cite{Pawlowski2009}. As pointed out by the authors in \cite{Pawlowski2009}, IC holds true even if the quantum bits are transmitted provided that they are disentangled from the systems of the receiver -- a consequence of Holevo bound, which limits information gain after transmission of $m$ such qubits to $m$ classical bits. While playing the IC-N game by sending some GPT system from Alice to Bob in presence of classical correlation shared in advance, the precise formulation of the IC, that we will use in this work, reads as $I_N \le \Theta$, where $\Theta$ is the Holevo capacity of the communicated system in consideration. In Appendix {\bf B} we discuss an example of toy theory that leads to violation of IC principle.

{\it Unphysicality of extremal quantum compositions.--} Let us consider the IC-$3$ game. If Alice communicates $2$-cbits or $2$-qubits (in quantum composition) to a classically correlated Bob then the necessary condition of IC is satisfied, {\it i.e.}, efficiency of their collaborative strategy $I_3$ satisfies the bound $I_3\le2$. Our next results show that this is not true if one considers extremal tensor product compositions between two qubits.
\begin{theorem}\label{theo1}
Minimal tensor product composition of two elementary quantum violates IC principle.
\end{theorem}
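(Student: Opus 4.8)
The plan is to exhibit an explicit protocol for the IC-$3$ game that uses a single copy of the minimal tensor product of two qubits as the Type-II resource and achieves $I_3 > 2$, thereby contradicting the IC bound $I_3 \le \Theta$. First I would pin down the value of the bound: by Proposition~\ref{prop1} with $d=2$ and $k=2$ the information capacity of this system is $4$, and since the capacity is additive the Holevo capacity is $\Theta = \log 4 = 2$ bit, so IC demands $I_3 \le 2$. The crucial structural observation is that the violation cannot, and must not, come from nonlocal correlations, because the minimal composition admits only separable states and is therefore Bell local. Instead the advantage has to be extracted from the \emph{enlarged effect cone}: in the minimal theory $\mathcal{E}^{AB}_+[\min] = \Omega^{AB}_+[\max]$ consists of all block-positive (POPT) operators, so Bob may decode with measurements that are forbidden in ordinary quantum theory. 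The protocol is therefore of prepare-and-measure type rather than correlation-assisted.

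Concretely, I would let Alice and Bob share classical randomness $\lambda$ and have Alice map her string $\mathbf{a}=(a_1,a_2,a_3)$ to a \emph{separable} two-qubit state $\sigma^\lambda_{\mathbf{a}}$, which is a legal state of the minimal composition, and send it to Bob. On receiving $b$ and $\lambda$, Bob applies a two-outcome measurement $\{E^\lambda_b,\,\mathbf{1}-E^\lambda_b\}$ whose effects are block-positive operators, legal effects of the minimal composition but in general not valid quantum POVM elements, and announces the outcome as his guess $\beta$ for $a_b$. I would choose the states and effects with enough symmetry among the three indices $k\in\{1,2,3\}$ that the induced channel $a_k \mapsto \beta$, averaged over $\lambda$, is a symmetric binary channel, so that $I(a_k:\beta\,|\,b=k) = 1 - h(p_k)$ with $p_k$ the guessing probability and $h$ the binary entropy. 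The exotic block-positive effects let each $p_k$ be pushed above the best quantum value, and the goal is to make $I_3 = \sum_{k=1}^{3}\bigl(1-h(p_k)\bigr)$ strictly exceed $2$.

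The main obstacle is the simultaneous construction and validation of the encoding states and the decoding effects: one must produce separable $\sigma^\lambda_{\mathbf{a}}$ together with effects $E^\lambda_b$ that (i) genuinely lie in the respective state and effect cones of the minimal theory, (ii) have the property that both $E^\lambda_b$ and $\mathbf{1}-E^\lambda_b$ are block positive, so that they constitute an admissible measurement, and (iii) yield guessing probabilities large enough that $\sum_{k}(1-h(p_k)) > 2$, i.e. an average of $1-h(p_k)$ exceeding $2/3$, corresponding to recovering each relevant bit with probability well above $0.9$. Verifying block positivity, equivalently non-negativity of $\tr[E^\lambda_b\,(\pi_A\otimes\pi_B)]$ on all product operators, and verifying separability of the encoding are the delicate parts; I would cut down the free parameters by imposing cyclic symmetry in $k$ and, where convenient, pass to the unnormalized-cone picture so that the admissibility constraints reduce directly to the membership conditions in the definitions of $\Omega^{AB}_+[\min]$ and $\mathcal{E}^{AB}_+[\min]$. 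Once a valid choice clears these constraints, evaluating $I_3$ is a routine entropy computation, and any value above $2$ completes the proof that the minimal composition violates IC.
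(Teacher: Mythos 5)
Your plan reproduces the skeleton of the paper's own proof---the IC-$3$ game, the bound $I_3\le\Theta=2$ obtained from Proposition~\ref{prop1}, and the correct diagnosis that the advantage must come from Bob decoding with block-positive effects (legal since $\mathcal{E}^{AB}_+[\min]=\Omega^{AB}_+[\max]$) applied to separable encodings---but it stops exactly where the mathematical content of the theorem begins. Your conditions (i)--(iii) restate what a successful construction must satisfy; the proof \emph{is} the construction, and you never produce it. The paper does: Alice encodes the eight strings into the product states $\ket{+}\ket{+},\ket{-}\ket{-},\ket{0}\ket{0},\ket{1}\ket{1},\ket{1}\ket{0},\ket{0}\ket{1},\ket{+}\ket{-},\ket{-}\ket{+}$; Bob decodes bits $1$ and $2$ with two-outcome measurements whose first effects are $E_1=\tfrac{1}{2}(\mathbf{1}\otimes\mathbf{1}+\sigma_x\otimes\sigma_x+\sigma_z\otimes\sigma_z)$ and $E_2=\tfrac{1}{2}(\mathbf{1}\otimes\mathbf{1}+\sigma_x\otimes\sigma_x-\sigma_z\otimes\sigma_z)$---not positive semidefinite, but block positive because $|m_xn_x\pm m_zn_z|\le1$ for Bloch vectors $\vec{m},\vec{n}$---and bit $3$ with the ordinary quantum measurement $\mathbf{1}_2\otimes\tfrac{1}{2}(\mathbf{1}_2+\sigma_z)$. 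This decodes bits $1$ and $2$ perfectly and bit $3$ through a binary symmetric channel of crossover $1/4$, so $I_3=1+1+\bigl(1-h(1/4)\bigr)\approx2.19>2$, where $h$ is the binary entropy. Without some such explicit table of states and effects together with the positivity verification, what you have is a research plan, not a proof.

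Moreover, the specific ansatz you commit to points away from any workable solution. A fully symmetric strategy needs $1-h(p_k)>2/3$ for every $k$, i.e.\ $p_k\approx0.94$; but for a product encoding $\rho_{mn}$ the complementary correlations obey $\sum_i|m_in_i|\le|\vec{m}||\vec{n}|\le1$ by Cauchy--Schwarz, so the natural symmetric scheme (three bits in the signs of $\langle\sigma_x\otimes\sigma_x\rangle$, $\langle\sigma_y\otimes\sigma_y\rangle$, $\langle\sigma_z\otimes\sigma_z\rangle$) caps each bias at $1/3$, giving $p_k\le2/3$---nowhere near the target. The paper's violation exists precisely because the strategy is asymmetric: two bits perfect, the third worth only $\approx0.19$ bits. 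The shared randomness $\lambda$ you introduce is also counterproductive as you use it: only $\beta$, not the pair $(\beta,\lambda)$, enters $I_N$, and mutual information is convex in the channel for fixed input, so the $\lambda$-averaged strategy can never exceed the best deterministic one. Concretely, cyclically symmetrizing the paper's protocol over the three bit positions turns each per-bit channel into a binary symmetric channel of crossover $1/12$, yielding $I_3=3\bigl(1-h(1/12)\bigr)\approx1.76<2$ and destroying the violation. Drop the randomness and the symmetry requirement, and supply explicit asymmetric encodings and block-positive decodings with their positivity check; that is the missing proof.
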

\begin{proof}
In minimal tensor product theory Alice can use only the product states of $\mathbb{C}^2\otimes\mathbb{C}^2$ system for encoding her strings. The exact encodings is shown in Table \ref{tab1}. Since $\mathcal{E}^{AB}[\text{min}]>\mathcal{E}^{AB}[Q]>\mathcal{E}^{AB}[\text{max}]$, Bob's decodings can comprise of exotic measurements that are not allowed in quantum theory. We consider the decoding measurements $\mathcal{M}_i:=\{E_i,\bar{E}_i:=\mathbf{1}_4-E_i\}$, where
\begin{align*}
E_1:=\begin{pmatrix}1&0&0&\frac{1}{2}\\0&0&\frac{1}{2}&0\\0&\frac{1}{2}&0&0\\\frac{1}{2}&0&0&1\end{pmatrix},~&E_2:=\begin{pmatrix}0&0&0&\frac{1}{2}\\0&1&\frac{1}{2}&0\\0&\frac{1}{2}&1&0\\\frac{1}{2}&0&0&0\end{pmatrix},\\
E_3:=\mathbf{1}_2&\otimes\frac{1}{2}(\mathbf{1}_2+\sigma_z),
\end{align*}
and the bit value is guessed to be $0$ if the effect $E_i$ clicks and $1$ if $ \bar{E}_i$ clicks. Being a valid quantum measurement, $\mathcal{M}_3$ is also an allowed measurement in minimal tensor product theory. Furthermore, on an arbitrary two-qubit product state $\rho_{mn}=\frac{1}{2}(\mathbf{1}_2+\vec{m}.\sigma)\otimes\frac{1}{2}(\mathbf{1}_2+\vec{n}.\sigma)$ we have,  $\tr[E_j\rho_{mn}]=\frac{1}{2}[1+m_xn_x+(-1)^{(j+1)}m_zn_z]\ge0$ and $\tr[\bar{E}_j\rho_{mn}]=\frac{1}{2}[1-m_xn_x-(-1)^{(j+1)}m_zn_z]\ge0$ for $j\in\{1,2\}$, that in turn assure $\mathcal{M}_1~\&~\mathcal{M}_2$ to be bona-fide measurements in minimal tensor product theory. According to Proposition \ref{prop1} the Holevo capacity of this system is 2 bits, and hence the theory violates IC principle since the aforesaid encoding-decoding strategy yields $I_3\approx2.19>2$. This concludes the proof. 
\end{proof}
\begin{center}
	\begin{table}[t!]
		\begin{tabular}{|c|c|lll|}
			\hline
			\multirow{2}{*}{Input } & \multirow{2}{*}{Alice's } & \multicolumn{3}{l|}{~~~~~~~~~~~~~~~~~~Bob's decoding}\\ \cline{3-5} 
			string & encoding & \multicolumn{1}{l|}{$1^{st}\to\mathcal{M}_1$ } & \multicolumn{1}{l|}{$2^{nd}\to\mathcal{M}_2$} & $3^{rd}\to\mathcal{M}_3$ \\ \hline\hline
			000 & $\ket{+}\ket{+}$ & \multicolumn{1}{l|}{~~~~~~~~~$0$} & \multicolumn{1}{l|}{~~~~~~~~~$0$} & ~~~~~$\mbox{random}$ \\ \hline
			001 & $\ket{-}\ket{-}$ & \multicolumn{1}{l|}{~~~~~~~~~$0$} & \multicolumn{1}{l|}{~~~~~~~~~$0$} & ~~~~~$\mbox{random}$ \\ \hline
			010 & $\ket{0}\ket{0}$ & \multicolumn{1}{l|}{~~~~~~~~~$0$} & \multicolumn{1}{l|}{~~~~~~~~~$1$} & ~~~~~~~~~~~$0$ \\ \hline
			011 & $\ket{1}\ket{1}$ & \multicolumn{1}{l|}{~~~~~~~~~$0$} & \multicolumn{1}{l|}{~~~~~~~~~$1$} & ~~~~~~~~~~~$1$ \\ \hline 
			100 & $\ket{1}\ket{0}$ & \multicolumn{1}{l|}{~~~~~~~~~$1$} & \multicolumn{1}{l|}{~~~~~~~~~$0$} & ~~~~~~~~~~~$0$ \\ \hline 
			101 & $\ket{0}\ket{1}$ & \multicolumn{1}{l|}{~~~~~~~~~$1$} & \multicolumn{1}{l|}{~~~~~~~~~$0$} & ~~~~~~~~~~~$1$ \\ \hline 
			110 & $\ket{+}\ket{-}$ & \multicolumn{1}{l|}{~~~~~~~~~$1$} & \multicolumn{1}{l|}{~~~~~~~~~$1$} & ~~~~~$\mbox{random}$ \\ \hline 
			111  & $\ket{-}\ket{+}$ & \multicolumn{1}{l|}{~~~~~~~~~$1$} & \multicolumn{1}{l|}{~~~~~~~~~$1$} & ~~~~~$\mbox{random}$ \\ \hline\hline
			\multicolumn{2}{|l|}{$I(a_k:\beta~|~b=k)$}& \multicolumn{1}{l|}{~~~~~~~~~$1$} & \multicolumn{1}{l|}{~~~~~~~~~$1$} & ~~~~~$\approx0.19$\\
			\hline
			\multicolumn{2}{|l|}{~~~~~~~~~~~~$I_3$}&\multicolumn{3}{l|}{$:=\sum_{k=1}^3I(a_k:\beta~|~b=k)\approx2.19>2$}\\
			\hline
		\end{tabular}
		\caption{Violation of IC principle with minimal tensor product composition of two elementary qubits. Since the theory allows only product states it cannot yield any nonlocal correlation. Here $\ket{\pm}:=(\ket{0}\pm\ket{1})/\sqrt{2}$. The bit values in rightmost three columns denote Bob's guess of the corresponding bit, where `random' means he guesses the bit value randomly and be correct in half of the time.}
		\label{tab1}
	\end{table}
\end{center}
At this point it should be noted that the strategy does not invoke any classical correlation which can be shared between Alice and Bob in advance. Therefore the obtained success in Theorem \ref{theo1} might not be the optimal one, and the question of optimality we leave here for future. However, by considering maximal tensor product composition between two qubits our next result provides a perfect strategy for IC-3 game.
\begin{theorem}\label{theo2}
Maximal tensor product composition of two elementary quantum violates IC principle.
\end{theorem}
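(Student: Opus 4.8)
The plan is to exhibit an explicit \emph{perfect} strategy for the IC-$3$ game in the maximal tensor product theory, mirroring the structure of Theorem~\ref{theo1} but with the roles of the state and effect cones interchanged. Since $\Omega^{AB}_+[\text{max}]$ is the POPT (block-positive) cone while $\mathcal{E}^{AB}_+[\text{max}]$ is the separable cone, here Alice may encode her strings in exotic beyond-quantum states whereas Bob is constrained to separable measurements. The goal is to show that, despite this being the most restrictive effect cone, Alice and Bob can achieve $I_3=3$; combined with the Holevo capacity of the communicated system, which equals its information capacity $\log 4 = 2$ bits by Proposition~\ref{prop2}, this yields $I_3=3>2$ and hence a violation of IC.

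For the encoding I would parametrize two-qubit states by their Pauli correlators, writing $W_{\vec{c}}=\frac{1}{4}(\mathbf{1}_2\otimes\mathbf{1}_2+\sum_{i}c_i\,\sigma_i\otimes\sigma_i)$ with $i\in\{x,y,z\}$. A short computation with Bloch vectors shows that $W_{\vec{c}}$ is block-positive---hence a legitimate state in the maximal theory---precisely when $\max_i|c_i|\le 1$, i.e.\ when $\vec{c}$ lies in the cube $[-1,1]^3$, whereas positive semidefiniteness (a genuine quantum state) confines $\vec{c}$ to the Bell tetrahedron with vertices $(1,-1,1),(-1,1,1),(1,1,-1),(-1,-1,-1)$. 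The key point is that all eight corners $\vec{c}=(\pm1,\pm1,\pm1)$ of the cube are valid POPT states: four of them (the tetrahedron vertices) are the Bell states, and the remaining four are genuinely exotic. I would encode the string $\mathbf{a}=(a_1,a_2,a_3)$ in the corner state $W_{\mathbf{a}}$ with $c_i=(-1)^{a_i}$, so that the three bits are stored in the three signs.

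For the decoding, Bob reads the $k$-th bit by performing the two-outcome measurement of the correlation observable $\sigma_{i_k}\otimes\sigma_{i_k}$, where $(i_1,i_2,i_3)=(x,y,z)$, outputting $0$ for the $+1$ outcome and $1$ for the $-1$ outcome. The crucial verification---and the main technical obstacle---is that these measurements lie in the separable effect cone. This holds because the spectral projectors decompose into product projectors: expressing each $\sigma_i$ in its own eigenbasis presents $\frac{1}{2}(\mathbf{1}_2\otimes\mathbf{1}_2\pm\sigma_i\otimes\sigma_i)$ as a sum of two rank-one product projectors onto aligned (resp.\ anti-aligned) eigenstate pairs, which is manifestly separable. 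On the encoding state one then has $\tr[(\sigma_{i_k}\otimes\sigma_{i_k})W_{\mathbf{a}}]=c_{i_k}=(-1)^{a_k}$, a deterministic $\pm1$, so Bob recovers $a_k$ with certainty.

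Because each guess is exact, $I(a_k:\beta\,|\,b=k)=1$ for every $k$, whence $I_3=3$, and since the Holevo capacity is $2$ bits the strategy breaks the IC bound $I_3\le 2$ and, unlike Theorem~\ref{theo1}, does so perfectly. I expect the only real obstacle to be exactly the separability check on the decoding effects: the maximal theory carries the \emph{smallest} effect cone, so one must confirm that resolving a single Pauli correlator---which already distinguishes each exotic encoding---can be achieved with purely separable effects. The geometric picture that makes everything work is the gap between the cube of POPT correlators and the tetrahedron of quantum ones, which supplies precisely the four extra exotic corners needed to store the third bit beyond the two bits of genuine capacity.
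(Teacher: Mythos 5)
Your proposal is correct and is essentially the paper's own proof: your eight cube-corner states $W_{\mathbf{a}}$ are exactly the four Bell states together with their partial transposes $\Gamma(\phi^{\pm}),\Gamma(\psi^{\pm})$ used in the paper's encoding table, and your decoding via the separable correlator measurements $XX$, $YY$, $ZZ$ (each a sum of two product projectors) is identical, yielding the same perfect score $I_3=3>2=\Theta$ by Proposition~\ref{prop2}. The only cosmetic difference is how the encoding states are certified as valid: you verify block-positivity directly through the Bloch-vector computation placing the POPT correlators in the cube $[-1,1]^3$, whereas the paper notes that the four exotic corners are partial transposes of pure states and hence automatically POPT.
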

\begin{proof}
For any $\ket{\psi}_{AB}\in\mathbb{C}^2\otimes\mathbb{C}^2$ we have $\Gamma(\psi_{AB}):=\mathbb{I}_A\otimes\mathrm{T}_B(\psi_{AB})\in\Omega[\mathbb{C}^2\otimes_{\max}\mathbb{C}^2]$, where $\mathbb{I}$ is the identity map, $\mathrm{T}$ is the transpose map, and $\psi:=\ket{\psi}\bra{\psi}$. Alice can use any of these states for encoding her strings. The explicit encodings used by Alice are shown in Table \ref{tab2}. For decoding, Bob is allowed to perform any separable measurement on the composite state received from Alice. By $ZZ$ we denote the separable measurement $\{\ket{00}_{AB}\bra{00}+\ket{11}_{AB}\bra{11},\ket{01}_{AB}\bra{01}+\ket{10}_{AB}\bra{10},\}\equiv\{\pi^{z}_{c}:=\frac{1}{2}\left(\mathbf{1}\otimes\mathbf{1}+\sigma_z\otimes\sigma_z\right),\pi^{z}_{ac}:=\frac{1}{2}\left(\mathbf{1}\otimes\mathbf{1}-\sigma_z\otimes\sigma_z\right)\}$. Similarly we have the separable measurements $XX\equiv\{\pi^{x}_{c}:=\frac{1}{2}\left(\mathbf{1}\otimes\mathbf{1}+\sigma_x\otimes\sigma_x\right),\pi^{x}_{ac}:=\frac{1}{2}\left(\mathbf{1}\otimes\mathbf{1}-\sigma_x\otimes\sigma_x\right)\}$ and $YY\equiv\{\pi^{y}_{c}:=\frac{1}{2}\left(\mathbf{1}\otimes\mathbf{1}+\sigma_y\otimes\sigma_y\right),\pi^{y}_{ac}:=\frac{1}{2}\left(\mathbf{1}\otimes\mathbf{1}-\sigma_y\otimes\sigma_y\right)\}$. For decoding the $1^{st}$, $2^{nd}$, and $3^{rd}$ bit Bob performs $XX$, $YY$, and $ZZ$ measurements respectively, and guesses the bit value as $0~(1)$ if the effect $\pi^\star_c~(\pi^\star_{ac})$ clicks. Expressing Alice's encoding states in Hilbert–Schmidt form: 
\begin{align*}
\Gamma(\phi^+)&=\left(\mathbf{1}\otimes\mathbf{1}+\sigma_x\otimes\sigma_x+\sigma_y\otimes\sigma_y+\sigma_z\otimes\sigma_z\right)/4,\\
\psi^+&=\left(\mathbf{1}\otimes\mathbf{1}+\sigma_x\otimes\sigma_x+\sigma_y\otimes\sigma_y-\sigma_z\otimes\sigma_z\right)/4,\\
\phi^+&=\left(\mathbf{1}\otimes\mathbf{1}+\sigma_x\otimes\sigma_x-\sigma_y\otimes\sigma_y+\sigma_z\otimes\sigma_z\right)/4,\\
\Gamma(\psi^+)&=\left(\mathbf{1}\otimes\mathbf{1}+\sigma_x\otimes\sigma_x-\sigma_y\otimes\sigma_y-\sigma_z\otimes\sigma_z\right)/4,\\
\phi^-&=\left(\mathbf{1}\otimes\mathbf{1}-\sigma_x\otimes\sigma_x+\sigma_y\otimes\sigma_y+\sigma_z\otimes\sigma_z\right)/4,\\
\Gamma(\psi^-)&=\left(\mathbf{1}\otimes\mathbf{1}-\sigma_x\otimes\sigma_x+\sigma_y\otimes\sigma_y-\sigma_z\otimes\sigma_z\right)/4,\\
\Gamma(\phi^-)&=\left(\mathbf{1}\otimes\mathbf{1}-\sigma_x\otimes\sigma_x-\sigma_y\otimes\sigma_y+\sigma_z\otimes\sigma_z\right)/4,\\ 
\psi^-&=\left(\mathbf{1}\otimes\mathbf{1}-\sigma_x\otimes\sigma_x-\sigma_y\otimes\sigma_y-\sigma_z\otimes\sigma_z\right)/4,
\end{align*} 
it becomes evident that this particular encoding-decoding strategy yields a perfect winning condition for the game IC-3. Thus we have $I_3=3$, establishing a violation of IC principle as according to Proposition \ref{prop2} Holevo capacity of the corresponding system is bounded by $2$.
\end{proof}
\begin{center}
	\begin{table}[t!]
		\begin{tabular}{|c|c|lll|}
			\hline
			\multirow{2}{*}{Input } & \multirow{2}{*}{Alice's } & \multicolumn{3}{l|}{~~~~~~~~~~~~~~~~~~Bob's decoding}\\ \cline{3-5} 
			string & encoding & \multicolumn{1}{l|}{$1^{st}\to$~XX} & \multicolumn{1}{l|}{$2^{nd}\to$~YY} & $3^{rd}\to$~ZZ \\ \hline\hline
			000 & $\Gamma(\phi^+)$ & \multicolumn{1}{l|}{~~~~~~~$0$} & \multicolumn{1}{l|}{~~~~~~~$0$} & ~~~~~~~$0$ \\ \hline
			001 & $\psi^+$ & \multicolumn{1}{l|}{~~~~~~~$0$} & \multicolumn{1}{l|}{~~~~~~~$0$} & ~~~~~~~$1$ \\ \hline
			010 & $\phi^+$ & \multicolumn{1}{l|}{~~~~~~~$0$} & \multicolumn{1}{l|}{~~~~~~~$1$} & ~~~~~~~$0$ \\ \hline
			011 & $\Gamma(\psi^+)$ & \multicolumn{1}{l|}{~~~~~~~$0$} & \multicolumn{1}{l|}{~~~~~~~$1$} & ~~~~~~~$1$ \\ \hline 
			100 & $\phi^-$ & \multicolumn{1}{l|}{~~~~~~~$1$} & \multicolumn{1}{l|}{~~~~~~~$0$} & ~~~~~~~$0$ \\ \hline 
			101  & $\Gamma(\psi^-)$ & \multicolumn{1}{l|}{~~~~~~~$1$} & \multicolumn{1}{l|}{~~~~~~~$0$} & ~~~~~~~$1$ \\ \hline
			110 & $\Gamma(\phi^-)$ & \multicolumn{1}{l|}{~~~~~~~$1$} & \multicolumn{1}{l|}{~~~~~~~$1$} & ~~~~~~~$0$ \\ \hline 
			111 & $\psi^-$ & \multicolumn{1}{l|}{~~~~~~~$1$} & \multicolumn{1}{l|}{~~~~~~~$1$} & ~~~~~~~$1$ \\ \hline\hline
			\multicolumn{2}{|l|}{$I(a_k:\beta~|~b=k)$}& \multicolumn{1}{l|}{~~~~~~~$1$} & \multicolumn{1}{l|}{~~~~~~~$1$} & ~~~~~~~$1$\\
			\hline
			\multicolumn{2}{|l|}{~~~~~~~~~~~~$I_3$}&\multicolumn{3}{l|}{$:=\sum_{k=1}^3I(a_k:\beta~|~b=k)=3>2$}\\
			\hline
		\end{tabular}
		\caption{Prefect strategy for IC-$3$ game with communication of two qubits in maximal tensor product. Here, $\ket{\phi^{\pm}}:=(\ket{00}\pm\ket{11})/\sqrt{2}$ and $\ket{\psi^{\pm}}:=(\ket{01}\pm\ket{10})/\sqrt{2}$. Rightmost three columns show that Bob's guess is always correct.  Thus the strategy yields $I(a_k:\beta~|~b=k)=1~\forall~k\in\{1,2,3\}$, and as a result we have $I_3=3$.}
		\label{tab2}
	\end{table}
\end{center}
{\it Discussion.--} While we have excluded the two extreme composition through the violation of IC principle, it would be interesting to exclude all other intermediate compositions except the quantum one in a similar approach. A first step to this direction is to generalize our Proposition \ref{prop1} and \ref{prop2} for such an arbitrary composition which we conjecture to hold true. Then an encoding-decoding strategy in such a composite model yielding a better than quantum success in IC-N game would invalidate that composition to be the physical one. For instance, any composition allowing states and effects used for encoding and decoding either in Theorem \ref{theo1} or in Theorem \ref{theo2} immediately gets excluded as unphysical.

The work of Ref.\cite{Barnum2010} is also worth mentioning, where it has been shown that all the NS correlations possible in any of the bipartite composition of local quantum systems are quantum simulable hence beyond quantum nonlocal correlation is not possible there (see also \cite{Lobo2022}). More precisely, the work in \cite{Barnum2010} does not rule out the existence of POPT states, but shows that they do not generate stronger than quantum correlations in Bell kind of experiment. In fact minimal tensor product composition allows only local correlation in Bell scenario. Despite this, our result show that different such exotic compositions can be rejected as unphysical with the help of IC principles. The power emerges from the fact that IC deals with communication tasks involving both the state space for encodings and effect space for decoding. The beyond quantum exotic nature of state and/or effect space exhibit their implication in the violation of IC principle.

Some recent studies by the present authors \cite{Naik2022,Sen2022} are also worth noting here. By considering the pairwise distinguishability game, where it has been shown that minimal and maximal tensor product compositions allow stronger than quantum correlations in timelike scenario. The feature of `dimension mismatch' \cite{Brunner2014} plays the crucial role there. In that respect, the approach of the present work is logically independent  as it can be argued that `dimension mismatch' is not necessary for violation of IC principle (see Appendix {\bf C}).

{\it Conclusion.--} The notion of composition is one of the guiding tools to fabricate our worldview -- while complex objects are composed of elementary parts, some compositions deem implausible \cite{Penrose1958}. The idea becomes important even while constructing theories in Physics \cite{Hardy2013,Hardy2016,Bhowmik2021}. In this work, we study this particular aspect while considering multiple quantum systems. Interestingly, we show that the principle of Information Causality plays a crucial role in selecting the quantum composition among different mathematical possibilities. In the process Information Causality alone can discard a theory that admits only Bell local correlation. This might make Information Causality champion over the other principles \cite{Navascues2010,Fritz2013,Amaral2014}. Present work thus brings some physical justifying towards quantum composition structure from information principles. The potentiality arises from the communication aspect of those principles since they invoke preparations (for encoding) and measurements (for decoding) of the involved systems and thus becomes more structure sensitive. As for future works, it would be quite interesting to see what other structural aspects of multipartite quantum systems can be rationalized through information principles as the study in the present work is limited to bipartite compositions only.

\begin{acknowledgements}
SGN acknowledges support from the CSIR project 09/0575(15951)/2022-EMR-I. GLS acknowledges support from the CSIR project 09/0575(15830)/2022-EMR-I. MA and MB acknowledge funding from the National Mission in Interdisciplinary Cyber-Physical systems from the Department of Science and Technology through the I-HUB Quantum Technology Foundation (Grant no: I-HUB/PDF/2021-22/008). MB acknowledges support through the research grant of INSPIRE Faculty fellowship from the Department of Science and Technology, Government of India and the start-up research grant from SERB, Department of Science and Technology (Grant no: SRG/2021/000267).
\end{acknowledgements}

\section{Appendix: A}
{\bf Proof of Proposition \ref{prop2}}.
\begin{proof}
Assume that $\mathcal{I}
\left((\mathbb{C}^d)^{\otimes_{\max}k}\right) = N$. Clearly, $N\ge d^k$ since there are $d^k$ product states in $(\mathbb{C}^d)^{\otimes_{\max}k}$ that can be perfectly distinguished by product measurements. Furthermore, since we have assumed $\mathcal{I}
\left((\mathbb{C}^d)^{\otimes_{\max}k}\right) = N$, there exist $N$ POPT states $\{W_1, \cdots , W_N \}$ and a separable measurement $\{E_1,\cdots,E_N~|~\sum_{i=1}^{N} E_i =\mathbf{1}_{d}^{\otimes_{k}}\}$ such that $\tr(E_i W_j)=\delta_{ij} ~~\forall i,j$. Let $Y_i  := \mathbf{1}_{d}^{\otimes_{k}} - W_i$. We will first show that $Y_i$ is an (unnormalized) POPT state, {\it i.e.}, $\bra{a_{1\cdots k}} Y_i \ket{a_{1\cdots k}} \ge 0$ for any choice of product state $\ket{a_{1\cdots k}}:=\ket{a_1 \otimes a_2 \otimes \cdots \otimes a_k}$. Indeed, $\bra{a_{1\cdots k}} Y_i \ket{a_{1\cdots k}}= 1 - \bra{a_{1\cdots k}}  W_i \ket{a_{1\cdots k}}$, and we only need to show that $\bra{a_{1\cdots k}}  W_i \ket{a_{1\cdots k}} \le 1$. But this follows from the fact that $\bra{a_{1\cdots k}}  W_i \ket{a_{1\cdots k}}$ must be a valid probability since $W_i$ is a POPT state and $ \ket{a_1 \otimes a_2 \otimes \cdots \otimes a_k}  \bra{a_1 \otimes a_2 \otimes \cdots \otimes a_k}$ is a valid effect in maximal composition. Now we can proceed in a similar manner as in the proof of Proposition \ref{prop1}, {\it i.e.}
\begin{align*}
d^k &=  \tr(\mathbf{1}_{d}^{\otimes_{k}}) = \sum_{i=1}^{N} \tr(E_i)= \sum_{i=1}^{N} \tr[E_i    \mathbf{1}_{d}^{\otimes_{k}}]\nonumber\\
&= \sum_{i=1}^{N} \tr[E_i   ( W_i + Y_i) ]\nonumber\\
&= \sum_{i=1}^{N} \tr[E_i  W_i ]  + \sum_{i=1}^{N} \tr[E_i Y_i ]\nonumber\\
&\ge \sum_{i=1}^{N} \tr[E_i  W_i] = \sum_{i=1}^{N} \delta_{ii} = N. 
\end{align*}
The inequality follows from the fact that $Y_i$ is POPT, and thus $\tr[E_i Y_i] \ge 0$. Since we already showed that $N\ge d^k$, we conclude that $N= \mathcal{I}\left((\mathbb{C}^d)^{\otimes_{\max}k}\right) = d^k$. Consequently we have $\Theta(\mathbb{C}^d_{\max}):=\lim_{k\to\infty}\frac{1}{k}\log\left[\mathcal{I}((\mathbb{C}^d)^{\otimes_{\max} k})\right]= \lim_{k\to\infty}\frac{1}{k}\log(d^k) = \log d$.
\end{proof}

\section{Appendix: B} 
\begin{figure}[t!]
\centering
\includegraphics[width=0.45\textwidth]{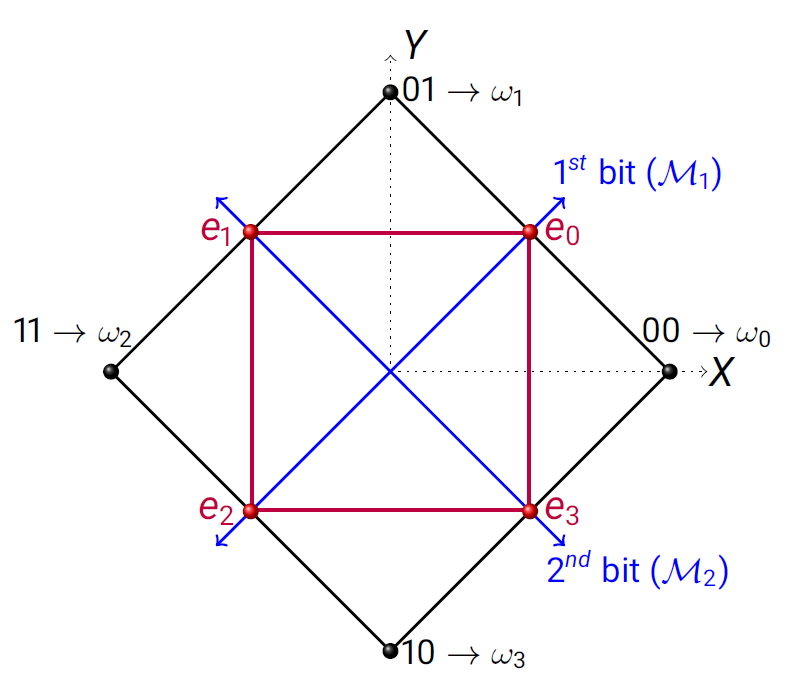}
\caption{Perfect strategy of the IC-$2$ with communication of a square-bit from Alice to Bob. Alice's encodings are shown in the figure. For decoding the first bit Bob performs the measurement $\mathcal{M}_1:=\{e_0,e_2\}$ and guesses the bit value as $0$ if the effect $e_0$ clicks, else he guesses $1$. For the second bit he performs the measurement $\mathcal{M}_2:=\{e_1,e_3\}$ and guesses the bit value as $0$ if the effect $e_3$ clicks, else he guesses $1$. Since Bob always guesses the bit correctly, $I(a_k:\beta|b=k)=1$ for $k\in\{1,2\}$, which in turn results in $I_2=2$, a violation of the IC principle.}
\label{fig3}
\end{figure}
We first briefly recall a toy system having a square shaped state space \cite{Janotta2011}. Its four extremal states are given by $\omega_0=(1,0,1)^\mathrm{T},~\omega_1=(0,1,1)^\mathrm{T},~\omega_2=(-1,0,1)^\mathrm{T},~\&~\omega_3=(0,-1,1)^\mathrm{T}$; and four of its ray extremal effects are $e_0=\frac{1}{2}(1,1,1)^\mathrm{T},~e_1=\frac{1}{2}(-1,1,1)^\mathrm{T},~e_2=\frac{1}{2}(-1,-1,1)^\mathrm{T},~\&~e_3=\frac{1}{2}(1,-1,1)^\mathrm{T}$, while $u=(0,0,1)^\mathrm{T}$ is the unit effect. An interesting feature of this theory is that, unlike quantum theory, it is not self dual, {\it i.e.}, the state cone and effect cone are not identical to each other. Although all the four states of this square-bit theory are pairwise distinguishable, its information capacity as well as Holevo capacity is the same as that of a qubit.

However, it turns out that the square-bit theory violates the IC principle. To see this, consider the IC-$2$ game. With a classical bit or quantum bit communication from Alice to Bob one has $I_2=1$, which satisfies the necessary condition of IC. But with a square-bit communication one can have $I_2=2$, a clear cut violation of the IC principle. The prefect strategy of the IC-$2$ game with a square bit is shown in Fig. \ref{fig3}. For more detail we refer to the work of Ref. \cite{Massar2014}. The violation of IC by square-bit model possibly caused by the lack of `consistency' in mutual information as required in the derivation of the necessary condition of IC. A consistent notion of entropy is not well defined in such a model \cite{Krumm2017,Takakura19}.

\section{Appendix: C}
\begin{figure}[t!]
\centering
\includegraphics[width=0.45\textwidth]{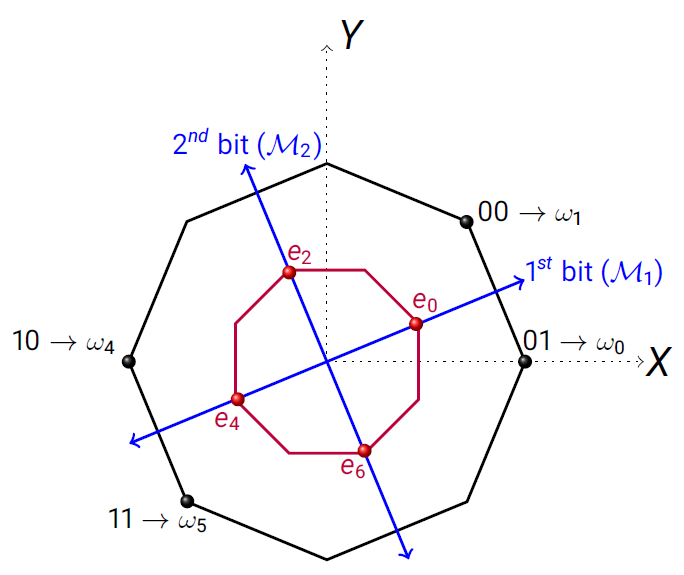}
\caption{Perfect strategy of the IC-$2$ game with communication of an octagon-bit from Alice to Bob. Alice's encodings and Bob's decodings are shown. A straightforward calculation yields $I_2:=\sum_{k=1}^2I(a_k:\beta|b=k)=2-\frac{1}{2\sqrt{2}}+(1-1/\sqrt{2})\log_2(1-1/\sqrt{2})\approx 1+0.1275$, a violation of the IC principle.}
\label{fig4}
\end{figure}
While information capacity
$\mathcal{I}(\mathcal{S})$  of a system $\mathcal{S}$ is the the maximum number of states that can be perfectly distinguished in a single shot measurement, we can also introduce another concept called information dimension of a system \cite{Brunner2014}. Information dimension $\mathcal{I}_{nf}(\mathcal{S})$ of a system $\mathcal{S}$ is the maximum number of states of the system that are perfectly distinguishable pairwise. Arguably it follows that $\Delta\mathcal{I}:=\mathcal{I}_{nf}(\mathcal{S})-\mathcal{I}(\mathcal{S})\ge0$, and whenever $\Delta\mathcal{I}>0$ we say the system exhibits `dimension mismatch'. Importantly, `dimension mismatch' is not necessary for violation of IC principle. To see this, we recall the polygon system with eight extremal states $\omega_i=\left(r_8\cos\frac{2\pi i}{8},r_8\sin\frac{2\pi i}{8},1\right)^{\mathrm{T}}$ and eight extremal effects $e_i=\frac{1}{2}\left(r_8\cos\frac{2\pi(2i+1)}{8},r_8\sin\frac{2\pi (2i+1)}{8},1\right)^{\mathrm{T}}$, with $r_8=\sqrt{\sec\frac{\pi}{n}}~\&~i\in\{0,\cdots,7\}$ \cite{Janotta2011}. Clearly, this system does not exhibit any dimension mismatch. However, as noted in \cite{Massar2014}, it violates the IC principle. For the sake of completeness we depict the explicit protocol in Fig. \ref{fig4}.

\end{document}